\newcounter{mytempeqncnt}
\newtheorem{thm}{Theorem}
\newtheorem{lem}{Lemma}
\newenvironment{proof}[1][Proof]{\begin{trivlist}
\item[\hskip \labelsep {\bfseries #1}]}{\end{trivlist}}
\newcommand{\qed}{\nobreak \ifvmode \relax \else
      \ifdim\lastskip<1.5em \hskip-\lastskip
      \hskip1.5em plus0em minus0.5em \fi \nobreak
      \vrule height0.75em width0.5em depth0.25em\fi}
\newcommand*{\TitleFont}{%
      %\usefont{\encodingdefault}{\rmdefault}{b}{n}%
      \fontsize{16}{20}%
      \selectfont}
\begin{document}
\title{\TitleFont Achievable Degrees-of-Freedom of the $K$-user SISO Interference Channel with Blind Interference Alignment using Staggered Antenna Switching}

\author{Ahmed~M.~Alaa\IEEEauthorrefmark{1}\IEEEauthorrefmark{2},~\IEEEmembership{Student Member,~IEEE}, Mahmoud~H.~Ismail \IEEEauthorrefmark{2} \IEEEauthorrefmark{3},~\IEEEmembership{Senior Member,~IEEE} % <-this % stops a space
%\author{Author 1, Author 2
\thanks{
\IEEEauthorrefmark{1} Department of Electrical Engineering, University of California Los Angeles (UCLA), Los Angeles, CA, 90095, USA (e-mail: ahmedmalaa@ucla.edu).}
\thanks{
\IEEEauthorrefmark{2} Department of Electronics and Electrical Communications Engineering, Cairo University, Giza, 12613, Egypt (e-mails: \{aalaa, mismail\}@eece.cu.edu.eg).}
\thanks{
\IEEEauthorrefmark{3}Department of Electrical Engineering, The American University of Sharjah, PO Box 26666, Sharjah, UAE (email: mhibrahim@aus.edu).
}
}
\markboth{XXXXXX,~Vol.~XX, No.~X, XXXX~201X}%
{XX \MakeLowercase{\textit{et al.}}: Blind Interference Alignment for the $K$-user Interference Channel}

\maketitle
\begin{abstract}
In this paper, we characterize the achievable (linear) Degrees-of-Freedom (DoF) of the $K$-user SISO Interference Channel with Blind Interference Alignment (BIA) using staggered antenna switching. In such scheme, each transmitter is equipped with one conventional antenna and each receiver is equipped with one reconfigurable (multi-mode) antenna. Assuming that the channel is known to the receivers only, we show that by using linear BIA with staggered antenna switching at the receiver, a sum DoF of $\frac{2K}{K+2}$ is achievable. This result implies that using BIA, we can double the DoF achieved by orthogonal multiple access schemes for large number of users. Moreover, we propose a systematic algorithm to generate the transmit beamforming vectors and the reconfigurable antenna switching patterns, showing that it can achieve the $\frac{2K}{K+2}$ sum DoF for any $K$. Finally, we apply this algorithm to the 4-user SISO Interference Channel, and demonstrate that $\frac{4}{3}$ sum DoF is indeed achievable.
\end{abstract}

\begin{IEEEkeywords}
Blind interference alignment, degrees of freedom, interference channel, reconfigurable antennas.
\end{IEEEkeywords}

\IEEEpeerreviewmaketitle
\section{Introduction}
\IEEEPARstart{I}{nterference} is one of the most important factors that limit the capacity of wireless networks. Because characterizing the exact capacity region of interference channels is quite complicated, the Degrees-of-Freedom (DoF) has emerged as a convenient metric to quantify the capacity limits of such channels. The DoF correspond to the number of independent interference-free signaling dimensions, and have been characterized for many multiuser network settings.

Recently, {\it interference alignment} (IA) has been shown to achieve the DoF of various wireless networks \cite{1x},\cite{1}. In \cite{2}, Cadambe {\it et al.} have shown that IA can achieve the $\frac{K}{2}$ DoF of the $K$-user interference channel. However, the IA schemes presented therein require global channel state information to be known at the transmitters (CSIT), which is not always possible in practical systems. Stemming from this point, many research efforts have been dedicated to develop ``{\it Blind Interference Alignment"} (BIA) techniques that can be applied without CSIT. In \cite{3}, Jafar has shown that BIA can be applied if the channel temporal correlation structure follows specific patterns. Such patterns were created artificially using reconfigurable antennas in \cite{4} for the broadcast channel (BC). However, the channel temporal correlation structure that achieves the DoF of the $K$-user interference channel cannot arise in nature \cite{3}, and cannot be created using reconfigurable antennas \cite{4}.

Recently in \cite{7}, Wang presented the first characterization for the sum DoF when using BIA over the 3-user interference channel. It was shown that BIA can achieve a sum DoF of $\frac{6}{5}$, which is greater than the unity sum DoF achieved by naive orthogonal schemes. Besides, it was also shown that $\frac{6}{5}$ is the upper bound on the DoF achieved by linear BIA schemes. However, this result is specific for the 3-user channel, and the DoF of the generic $K$-user interference channel with BIA remains an open issue. In this work, we extend the analysis in \cite{7}, and show that when using BIA with staggered antenna switching over a $K$-user interference channel, a sum DoF of $\frac{2K}{K+2}$ is achievable. A key insight behind this result is that any signal vector can be aligned with interference at $K-2$ unintended receivers, which corresponds to the alignment scheme considered in this paper. Such result suggests that BIA can achieve double the rate achieved by orthogonal schemes for large $K$. This result is novel as it is the first characterization for the DoF of the $K$-user interference channel with BIA. Moreover, we propose a systematic algorithm for generating the transmit beamforming vectors and the receive antennas switching patterns in order to achieve the $\frac{2K}{K+2}$ DoF. Finally, we apply this algorithm to the 4-user SISO interference channel showing that $\frac{4}{3}$ sum DoF are achievable almost surely.

It is worth mentioning that several works have characterized the DoF of the interference channel with delayed CSIT. For example, in \cite{8}, it was shown that the DoF of the $K$-user interference channel with IA and outdated CSIT approach the limiting value $4/(6 \, \ln(2)  - 1) \approx 1.266$ as $K\rightarrow \infty$. Also, the sum DoF of two ergodic IA techniques with delayed CSIT presented in \cite{9} was shown to be $\frac{2K}{K+2}$. It is important to note here that our work departs from these works as BIA does not require any CSIT. Other BIA schemes, such as the relay-assisted scheme in \cite{10}, requires the presence of several relays between transmitters and receivers, which may not be the case in many practical scenarios. Contrarily, BIA using staggered antenna switching only requires the usage of reconfigurable antennas at the receivers, which does not entail significant hardware complexity as reconfigurable antennas use a single RF chain \cite{4}.

The rest of the paper is organized as follows. In Section II, we present the system model. Next, in Section III, we derive the sum DoF of the $K$-user SISO interference channel with BIA using staggered antenna switching at the receiver. A systematic algorithm to generate the beamforming vectors and the antenna switching patterns is presented in Section IV. This algorithm is applied for the 4-user SISO Interference Channel, and we show that $\frac{4}{3}$ sum DoF is achievable. Finally, we draw our conclusion in Section V.

\section{System Model}

We consider a fully connected $K$-user SISO interference channel where each transmitter is equipped with one conventional antenna, and each receiver is equipped with a reconfigurable (multi-mode) antenna that can switch among $M$ distinct modes. The received signal at the $k^{th}$ receiver over $m$ channel uses is given by
\begin{equation}
\label{1}
{\bf y}_{k} = \sum_{i=1}^{K} {\bf H}_{ki} {\bf x}_{i} + {\bf n}_{k}, \,\, k, i \in \{1,2,...,K\}
\end{equation}
where ${\bf y}_{k} \in \mathbb{C}^{m \times 1}$ represents the received signal over $m$ channel uses (time or frequency slots), ${\bf x}_{i} \in \mathbb{C}^{m \times 1}$ is the transmitted signal vector by the $i^{th}$ user, ${\bf n}_{k} \in \mathbb{C}^{m \times 1}$ is an additive noise vector where the noise sample at the $j^{th}$ channel use is $n_{k}(j) \sim \mathcal{CN}(0,1)$, and ${\bf H}_{ki} \in \mathbb{C}^{m \times m}$ is the channel matrix between the $i^{th}$ transmitter and the $k^{th}$ receiver. The channel matrix can be written as
\begin{equation}
\label{2}
{\bf H}_{ki} = \mbox{diag}([h_{ki}({\bf p}_{k}(1)) \,\, h_{ki}({\bf p}_{k}(2)) \,\, ... \,\, h_{ki}({\bf p}_{k}(m))]),
\end{equation}
where ${\bf p}_{k}$ is the reconfigurable antenna switching pattern at the $k^{th}$ receiver. We assume an antenna with $M$ possible modes, thus ${\bf p}_{k}(j) \in \{1,2,...,M\}$ is the selected antenna mode at the $j^{th}$ channel use. Each mode corresponds to a distinct channel realization, which means that at the $j^{th}$ channel use, the channel between the $i^{th}$ transmitter and the $k^{th}$ receiver belongs to the set $\{h_{ki}(1), h_{ki}(2),..., h_{ki}(M)\}$, depending on the switching pattern ${\bf p}_{k}$. We assume that all channel coefficients are drawn i.i.d. and are constant for $m$ channel uses. The transmitted signal vector ${\bf x}_{i}$ is given by
\begin{equation}
\label{3}
{\bf x}_{i} = \sum_{d=1}^{d_{i}} s_{d}^{[i]} \, {\bf u}_{d}^{[i]}
\end{equation}
where $d_{i}$ is the number of symbols transmitted by the $i^{th}$ user over $m$ channel uses, $s_{d}^{[i]}$ is the $d^{th}$ transmitted symbol, and ${\bf u}_{d}^{[i]}$ is an $m \times 1$ transmit beamforming vector for the $d^{th}$ symbol, where ${\bf u}_{d}^{[i]} = [u_{d}^{[i]}(1) \,\, u_{d}^{[i]}(2) \,\, ...\,\, u_{d}^{[i]}(m)]^{T}$, and $u_{d}^{[i]}(j)$ is an arbitrary complex-valued weight. %$u_{d}^{[i]}(j) \in \{0,1\}$, i.e., we either activate or deactivate a certain symbol at a certain channel use based on its beamforming vector.

\section{BIA using Staggered Antenna Switching: DoF Characterization}
In this section, we derive an achievable sum DoF of the interference channel with BIA using staggered antenna switching at the receivers. In the next theorem, we assume no CSIT, each receiver is equipped with a reconfigurable antenna with an arbitrary number of antenna modes, and each transmitter has a conventional antenna.
\begin{thm}
{\it In the $K$-user SISO interference channel with BIA using staggered antenna switching, a sum DoF of $\frac{2K}{K+2}$ is achievable almost surely.}
\end{thm}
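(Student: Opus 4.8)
The plan is to construct an explicit linear BIA scheme over a supersymbol of $m$ channel uses that delivers $2K$ total symbols while keeping the alignment overhead down to $K+2$ effective dimensions per user, so that each user achieves $\frac{2}{K+2}$ DoF and the sum equals $\frac{2K}{K+2}$. First I would fix the supersymbol length and the number of symbols per user by working backward from the target: to get $d_i/m = \frac{2}{K+2}$ I would have each transmitter send $d_i$ symbols whose beamforming vectors are drawn from a structured set, and I would choose the antenna switching patterns $\mathbf{p}_k$ so that the diagonal channel matrices $\mathbf{H}_{ki}$ in \eqref{2} take only finitely many distinct values across the supersymbol. The central design principle, highlighted in the introduction, is that any single transmitted signal vector can be aligned with interference at $K-2$ of its unintended receivers simultaneously; the two ``wasted'' receivers are the intended one and one additional receiver where the signal must remain resolvable. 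I would make this precise by partitioning the channel uses into blocks matched to the antenna modes and choosing the supports of the beamforming vectors $\mathbf{u}_d^{[i]}$ accordingly.

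Next I would verify decodability at each receiver by a dimension-counting argument. At receiver $k$, the received vector $\mathbf{y}_k$ lives in $\mathbb{C}^{m\times 1}$, and I would show that the desired signal occupies a subspace of dimension $d_k$ that is linearly independent (almost surely, over the i.i.d.\ channel draws) from the aggregate interference subspace. The key is that the staggered switching makes $\mathbf{H}_{ki}$ apply different diagonal scalings over the blocks, so interfering streams that would otherwise be distinct collapse onto a common, lower-dimensional subspace at receiver $k$ while the desired streams do not. Concretely, I would argue that the interference from all $K-1$ interferers, after alignment, spans only $m-d_k$ dimensions, leaving exactly $d_k$ interference-free dimensions for decoding; zero-forcing then recovers the $d_k$ symbols. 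The almost-sure linear independence follows because the relevant determinants are nonzero polynomials in the channel coefficients $h_{ki}(\cdot)$, which vanish only on a measure-zero set.

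To make the construction concrete I would exhibit the pattern for general $K$ parametrically, choosing $m$ and the per-block structure of the switching patterns so that the alignment relations are satisfied by construction, and then confirm that $\sum_i d_i = 2K$ relative to the chosen $m$ yields $\frac{2K}{K+2}$. It is cleanest to first present the alignment requirement abstractly as a set of subspace containment conditions, then show a specific assignment of beamforming supports and antenna modes that meets them.

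The hard part will be simultaneously satisfying two competing requirements: the alignment conditions force many interfering vectors to coincide (low interference dimension), while decodability forbids the desired streams from collapsing and forbids the aligned interference from accidentally overlapping the desired subspace. Balancing these so that the dimension count comes out to exactly $2K$ symbols in $m = K+2$ (per symbol-pair) effective dimensions -- and proving the no-accidental-overlap claim holds almost surely rather than merely generically -- is the crux. I expect the antenna-mode bookkeeping (ensuring each receiver's switching pattern induces precisely the collapses needed at that receiver without destroying separability elsewhere) to be the most delicate step, and I would defer the fully explicit pattern to the systematic algorithm promised for Section~IV, using it here only to certify that a valid assignment exists.
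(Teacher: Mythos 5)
Your proposal has a genuine gap: it is a plan for a proof rather than a proof, and the step you yourself identify as ``the crux'' is exactly the step you defer. Everything hinges on exhibiting (or at least proving the existence of) antenna switching patterns $\mathbf{p}_k$ and binary beamforming supports $\mathbf{u}_d^{[i]}$ that simultaneously (i) make every paired interfering stream collapse at the $K-2$ receivers where it should align, (ii) keep each receiver's $d_k$ desired streams linearly independent, and (iii) keep the aligned interference out of the desired subspace almost surely. You state these requirements correctly, but then ``defer the fully explicit pattern to the systematic algorithm promised for Section~IV, using it here only to certify that a valid assignment exists'' --- that is circular: the existence of such an assignment is precisely what must be proved. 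The almost-sure independence argument via nonvanishing polynomial determinants only applies once a concrete support structure is fixed; without it there is no polynomial to analyze. Note also that your leaner parameterization ($d_i = 2$ symbols per user over $m = K+2$ uses, a cycle of pairings) is consistent at the level of dimension counting but its feasibility is exactly the unverified part; the paper instead pairs \emph{every} two users, giving $d_i = K-1$ over $m = \tfrac{1}{2}(K+2)(K-1)$, and proves feasibility by an explicit matrix $\tilde{\mathbf{P}}$ whose Hadamard products of $K-2$ columns are linearly independent (Lemma 1), with beamformers $\mathbf{u}^{[i]}_{k_i}=\mathbf{u}^{[j]}_{k_j}=\tilde{\mathbf{p}}_1\circ\cdots\circ\tilde{\mathbf{p}}_{i-1}\circ\tilde{\mathbf{p}}_{i+1}\circ\cdots\circ\tilde{\mathbf{p}}_{j-1}\circ\tilde{\mathbf{p}}_{j+1}\circ\cdots\circ\tilde{\mathbf{p}}_K$.

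It is also worth flagging that the paper's own proof of Theorem 1 takes a different route from yours: it is a counting argument, not a construction. The paper considers the class of linear alignment schemes in which each dimension aligns with $l-1$ dimensions from $l-1$ other transmitters at the remaining $K-l$ receivers, invokes Lemma 2 of the 3-user paper to show a dimension cannot align with two distinct transmitter sets, and then combines per-transmitter and per-receiver dimension bounds to obtain
$\frac{1}{m}\sum_{i} d_i \leq \frac{K\, l!}{K\, l! - K + l}$,
which is maximized at $l=2$, yielding $\frac{2K}{K+2}$; the explicit construction certifying achievability is then given separately (Theorem 2). Your constructive approach, if completed, would in fact be closer to the paper's Theorem 2 and would be a legitimate self-contained achievability proof --- arguably more direct, since it avoids the optimization over $l$ --- but as written it omits the construction that carries all the weight.
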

\begin{proof}
Assume that user $i$ sends $d_{i}$ symbols over $m$ channel uses. That is, ${\bf x}_{i} = \sum_{d=1}^{d_{i}} s_{d}^{[i]} \, {\bf u}_{d}^{[i]}$. Let ${\bf v}_{d}^{[i]} = s_{d}^{[i]} \, {\bf u}_{d}^{[i]}$ be the $d^{th}$ dimension transmitted by user $i$, and ${\bf V}^{[i]} = [{\bf v}_{1}^{[i]} \,\, {\bf v}_{2}^{[i]} \, ... \, {\bf v}_{d_{i}}^{[i]}]$ be an $m \times d_{i}$ matrix containing all dimensions transmitted by user $i$. Now assume the following alignment scheme which we will use to prove the achievability result: ${\bf v}_{d}^{[i]}$ aligns with $l-1$ dimensions from a set of $l-1$ distinct transmitters, e.g., $\{1,2,...,i-1,i+1,...,l\}$ at the remaining $(K-l)$ receivers\footnote{Note that a set of $l$ dimensions from $l$ transmitters can be aligned, at most, at $(K-l)$ receivers. Otherwise, some receivers will have their desired signals aligned with interference. Besides, we cannot align multiple dimensions from the same transmitter at any unintended receiver, because if we did so, those dimensions will align at their desired receiver and will not be decodable.}, e.g., $\{l+1, l+2,...,K-1,K\}$. In the rest of the proof, we obtain the optimal value for $l$. Note that the following condition is satisfied
\begin{equation}
\label{4}
\mbox{at receiver $j$:} \,\, {\bf H}_{ji} {\bf v}_{d}^{[i]} \in \mbox{span}({\bf H}_{jk} {\bf V}^{[k]}),
\end{equation}
$\forall j \in \{l+1, l+2,...,K-1,K\}$, and $\forall k \in \{1, 2,..., i-1,i+1,...,l-1,l\}$. Now assume that ${\bf v}_{d}^{[i]}$ also aligns with another $l$ dimensions from a set of $l$ distinct transmitters that include the transmitter $l+1$ instead of $l$, i.e., $\{1,2,...,i-1,i+1,...,l-1,l+1\}$ at the remaining $(K-l)$ receivers, i.e., $\{l, l+2, l+3,...,K-1,K\}$. In this case, the following condition is satisfied
\begin{equation}
\label{5}
\mbox{at receiver $j$:} \,\, {\bf H}_{ji} {\bf v}_{d}^{[i]} \in \mbox{span}({\bf H}_{jk} {\bf V}^{[k]}),
\end{equation}
$\forall j \in \{l, l+2, l+3,...,K-1,K\}$, and $\forall k \in \{1, 2,..., i-1,i+1,...,l-1,l+1\}$. Given that all channel matrices ${\bf H}_{ji}$, $\forall i$, have the same changing pattern dictated by the predefined antenna mode switching pattern at receiver $j$, we directly apply Lemma 2 in \cite{7} to (\ref{4}) and (\ref{5}) to obtain
\begin{align}
\label{6}
{\bf v}_{d}^{[i]} \in \mbox{span}({\bf V}^{[k]}),
\end{align}
$\forall k \in \{1, 2,..., i-1,i+1,...,l-1, l, l+1\}$. Now, consider receiver $l+1$. From (\ref{4}) we have ${\bf H}_{l+1,i} \, {\bf v}_{d}^{[i]} \in  \mbox{span}({\bf H}_{l+1,k} {\bf V}^{[k]})$, which can be written as
\begin{equation}
\label{8}
{\bf H}_{l+1,i} \, {\bf v}_{d}^{[i]} \in \mbox{span}({\bf H}_{l+1,l+1} {\bf H}^{-1}_{l+1,l+1} {\bf H}_{l+1,k} {\bf V}^{[k]}).
\end{equation}
Thus, we have ${\bf v}_{d}^{[i]} \in \mbox{span}({\bf H}^{-1}_{l+1,l+1} {\bf H}_{l+1,k} {\bf V}^{[k]})$, $\forall k \in \{1, 2,..., i-1,i+1,...,l-1, l\}$. From (\ref{6}), we know that ${\bf v}_{d}^{[i]} \in \mbox{span}({\bf V}^{[l+1]})$. Therefore, $\mbox{span}({\bf V}^{[l+1]})$ and $\mbox{span}({\bf H}^{-1}_{l+1,l+1} {\bf H}_{l+1,k} {\bf V}^{[k]})$ have at least one-dimensional common subspace, and $\mbox{dim}({\bf V}^{[l+1]} \bigcap {\bf H}^{-1}_{l+1,l+1} {\bf H}_{l+1,k} {\bf V}^{[k]}) > 0$, which implies that ${\bf H}_{l+1,l+1} \, {\bf V}^{[l+1]}$ and ${\bf H}_{l+1,k} \, {\bf V}^{[k]}$ have a non-zero intersection $\forall k \in \{1, 2,..., i-1,i+1,...,l-1, l\}$. Because ${\bf H}_{l+1,l+1} \, {\bf V}^{[l+1]}$ is the desired signal, receiver $l+1$ will have its signal polluted with interference if ${\bf v}_{d}^{[i]}$ is aligned with distinct dimensions from the set of transmitters $\{1,2,...,i-1,i+1,...,l-1,l+1\}$ and $\{1,2,...,i-1,i+1,...,l-1,l\}$ simultaneously. Therefore, we reach the following conclusion. {\it Given the adopted alignment scheme, if any dimension from one transmitter aligns with interference from $l-1$ distinct transmitters at the remaining $K-l$ unintended receivers, then it cannot be aligned with any other set of transmitters at another set of unintended receivers}.

Let $d_{i_{1},i_{2},...,i_{l}}$ denote the number of dimensions of the common subspace projected from the $l$ transmitters $\{i_{1},i_{2},...,i_{l}\} \subset \{1,2,...,K\}$ at the remaining $K-l$ receivers. Then we can generalize conditions (21) and (22) in \cite{7} as
\begin{align}
\label{101}
d_{j_{1},j_{2},...,j_{l}} = d_{u_{1},u_{2},...,u_{l}},
\end{align}
$\forall j_{1} \neq j_{2} \neq ... \neq j_{l}, u_{1} \neq u_{2} \neq ... \neq u_{l}$, and $j_{1}, j_{2},..., j_{l}, u_{1}, u_{2}, ..., u_{l} \in \{i_{1},i_{2},...,i_{l}\}$, and
\begin{align}
\label{10}
\sum_{\substack{k_{1}=1\\k_{1} \neq i}}^{K} \sum_{\substack{k_{2}=k_{1}+1\\k_{2} \neq i}}^{K} \ldots \sum_{\substack{k_{l-1}=k_{l-2}+1\\k_{l-1} \neq i}}^{K} d_{i,k_{1},...,k_{l-1}} \leq d_{i},
\end{align}
where $i \in \{1, 2,...,K \}$. Now, consider receiver $k$. Because the desired signal vectors must be linearly independent from each other and from interference, they occupy $d_{k}$ dimensions. In addition, the interfering signals from every set of $l$ interfering transmitters $\{i_{1},i_{2},...,i_{l}\} \subset \{1,2,...,k-1,k+1,...K\}$ overlap in $d_{i_{1},i_{2},...,i_{l}}$ dimensions. Because the total number of dimensions is equal to the number of channel uses $m$, we have
\begin{align}
\label{11}
\sum_{i=1}^{K} d_{i} - \sum_{\substack{k_{1}=1\\k_{1} \neq k}}^{K} \sum_{\substack{k_{2}=k_{1}+1\\k_{2} \neq k}}^{K} \ldots \sum_{\substack{k_{l}=k_{l-1}+1\\k_{l} \neq k}}^{K} d_{k_{1},k_{2},...,k_{l}} \leq m.
\end{align}
By summing the inequality in (\ref{11}) over $k$, we have
\begin{align}
\label{12}
K\sum_{i=1}^{K} d_{i} - (K-l)\sum_{k_{1}=1}^{K} \sum_{k_{2}=k_{1}+1}^{K} \ldots \sum_{k_{l}=k_{l-1}+1}^{K} d_{k_{1},k_{2},...,k_{l}} \leq K m,
\end{align}
where the factor $(K-l)$ arises from the fact that each dimension $d_{k_{1},k_{2},...,k_{l}}$ aligns at $(K-l)$ receivers. Next, we sum (\ref{10}) over $i$ to get $\sum_{i=1}^{K}\sum_{\substack{k_{1}=1\\k_{1} \neq i}}^{K} \sum_{\substack{k_{2}=k_{1}+1\\k_{2} \neq i}}^{K} \ldots \sum_{\substack{k_{l-1}=k_{l-2}+1\\k_{l-1} \neq i}}^{K} d_{i,k_{1},...,k_{l-1}}$. It can be shown that this series will contain $d_{i,k_{1},...,k_{l-1}}$, $\{i,k_{1},...,k_{l-1}\} \in \{1,2,...,K\}$ with all possible permutations of $\{i,k_{1},...,k_{l-1}\}$. For instance, at $K=5$ and $l=3$, the summation reduces to $d_{123}+d_{124}+d_{125}+d_{134}+d_{135}+d_{145}+d_{213}+d_{214}+d_{215}+d_{234}+d_{235}+d_{245}+d_{312}+d_{314}+d_{315}+d_{324}+d_{325}$. From (\ref{101}), we know that $d_{123}=d_{132}=d_{321}=...=d_{213}$, then we have $\sum_{i=1}^{K}\sum_{\substack{k_{1}=1\\k_{1} \neq i}}^{K} \sum_{\substack{k_{2}=k_{1}+1\\k_{2} \neq i}}^{K} \ldots \sum_{\substack{k_{l-1}=k_{l-2}+1\\k_{l-1} \neq i}}^{K} d_{i,k_{1},...,k_{l-1}} = l ! \sum_{k_{1}=1}^{K} \sum_{k_{2}=k_{1}+1}^{K} \ldots \sum_{k_{l}=k_{l-1}+1}^{K} d_{k_{1},k_{2},...,k_{l}}$. Substituting with inequality (\ref{10}), we have $l ! \sum_{k_{1}=1}^{K} \sum_{k_{2}=k_{1}+1}^{K} \ldots \sum_{k_{l}=k_{l-1}+1}^{K} d_{k_{1},k_{2},...,k_{l}} \leq \sum_{i=1}^{K} d_{i}$. Thus, from (\ref{12}), we get
\begin{align}
\label{13}
K\sum_{i=1}^{K} d_{i} - \frac{(K-l)}{l !} \sum_{i=1}^{K} d_{i} \leq K m.
\end{align}
Note that $l$ is a design parameter, so we are interested in selecting $l$ that maximizes the sum DoF $\frac{1}{m}\sum_{i=1}^{K} d_{i}$. We can rearrange (\ref{13}) as
\begin{align}
\label{14}
\frac{1}{m}\sum_{i=1}^{K} d_{i} \leq \sup_{l \geq 2} \frac{K l !}{K l ! -K +l} = \frac{2K}{K+2},
\end{align}
which is achieved at $l$ = 2.
\IEEEQEDhere
\end{proof}
Thus, for a large number of users, the achievable sum DoF of BIA in the $K$-user interference channel approaches $2$. This means that the best we can do using the proposed scheme when the CSIT is not available is to offer double the DoF achievable by orthogonal multiple access schemes. Note that unlike Theorem 7 in \cite{11}, which assumes a very specific channel coherence structure, our result holds for any (general) coherence structure. In the following section, we propose an algorithm to systematically generate the antenna switching patterns and the beamforming vectors such that the $\frac{2K}{K+2}$ sum DoF is achieved.

\section{BIA Algorithm}

\subsection{Antenna switching patterns and beamforming vectors generation}

For the proposed algorithm, we assume that each reconfigurable antenna has only two modes, i.e., ${\bf p}_{k}(j) \in \{1,2\}$ is the selected antenna mode at the $j^{th}$ channel use. The channel between the $i^{th}$ transmitter and the $k^{th}$ receiver belongs to the set $\{h_{ki}(1), h_{ki}(2)\}$, depending on the switching pattern ${\bf p}_{k}$. We also assume that the elements of the beamforming vectors are binary, thus $u_{d}^{[i]}(j) \in \{0,1\}$, i.e., we either activate or deactivate a certain symbol at a certain channel use based on its beamforming vector. Later in this section, we show that imposing these assumptions, in addition to the algorithm presented in this subsection, the $\frac{2K}{K+2}$ sum DoF can be achieved. Such assumptions have important hardware implications. For instance, the proposed algorithm operates with low cost reconfigurable antennas that have only two modes. Besides, beamforming is very simple and applied by activating or deactivating certain symbols at the transmitter. The idea behind the algorithm and the insights based on which it was constructed can be found in the proof of Theorem 2.

In order to achieve the DoF presented in Theorem 1, we need to satisfy the bounds in (\ref{10}) and (\ref{13}). It is obvious that the bound in (\ref{10}) is achieved if every dimension transmitted by user $i$ aligns with some other dimension from another user. Thus, each user needs to send $K-1$ symbols per $m$ channel uses. Moreover, the bound in (\ref{13}) is satisfied if we use the minimal number of channel uses that keeps the desired signals and the interference linearly independent. This is simply satisfied by setting $m = \frac{1}{2}(K+2)(K-1)$, which is the division of the total number of symbols $K(K-1)$ by the maximum sum DoF. The following algorithm can be used to generate the transmit beamforming vectors and the antenna switching patterns for all users.

\begin{algorithmic}[1]
\Procedure{BIA}{$K$}
\State Generate a matrix ${\bf \tilde{P}} = [{\bf \tilde{p}}_{1}, \, {\bf \tilde{p}}_{2},...,{\bf \tilde{p}}_{K}] \in \mathbb{B}^{\frac{1}{2}(K+2)(K-1) \times K}$, where $\mathbb{B} = \{0,1\}$, such that all the column vectors that are the Hadamard product of all the combinations of $K-2$ column vectors in ${\bf \tilde{P}}$ are linearly independent.
\State Construct ${\bf P} = {\bf \tilde{P}} + {\bf 1}_{{\frac{1}{2}(K+2)(K-1) \times K}}$, where ${\bf 1}_{n \times m}$ is an $n \times m$ matrix with all elements = 1.
\State $x \gets 1$
\While{$x \leq \binom {K}{K-2}$}
\State Pick a new pair of distinct users $i$ and $j$, and a new pair of dimensions $k_{i}$ and $k_{j}$.
\State Set ${\bf u}^{[i]}_{k_{i}} = {\bf u}^{[j]}_{k_{j}} \gets {\bf \tilde{p}}_{1} \circ {\bf \tilde{p}}_{2} \circ ... \circ {\bf \tilde{p}}_{i-1} \circ {\bf \tilde{p}}_{i+1} \circ ... \circ {\bf \tilde{p}}_{j-1} \circ {\bf \tilde{p}}_{j+1} \circ ... \circ {\bf \tilde{p}}_{K}$, where $\circ$ is the Hadamard product (element-wise product).
\State $x \gets x+1$.
\EndWhile\label{euclidendwhile}
\State \textbf{return} ${\bf P}$ and ${\bf u}^{[1]}_{1},...,{\bf u}^{[1]}_{K-1},{\bf u}^{[2]}_{1},...,{\bf u}^{[2]}_{K-1},...,$ \\ ${\bf u}^{[K]}_{1},...,{\bf u}^{[K]}_{K-1}$.
\EndProcedure
\end{algorithmic}
The algorithm returns ${\bf P}$, whose column vectors represent the antenna switching patterns for all users, and ${\bf u}^{[i]}_{k}$, which is the beamforming vector for the $k^{th}$ dimension (symbol) sent by the $i^{th}$ user. In the next subsection, we investigate the DoF achieved by the proposed algorithm.

\subsection{DoF achievability by the proposed BIA algorithm}
Before investigating the achievable DoF by the proposed algorithm, we prove a useful lemma.
\begin{lem}
{\it For any $K$, there always exists a matrix ${\bf \tilde{P}} \in \mathbb{B}^{\frac{1}{2}(K+2)(K-1) \times K}$, such that the matrix ${\bf U}$ comprising column vectors that are the Hadamard product of all the combinations of $K-2$ column vectors in ${\bf \tilde{P}}$ has full column rank.}
\end{lem}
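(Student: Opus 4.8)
The plan is to prove existence by an \textbf{explicit construction}: I will exhibit one binary matrix $\tilde{\bf P}$ for which the columns of ${\bf U}$ are linearly independent over $\mathbb{R}$, which certifies full column rank. The first step is to set up coordinates. Since $\binom{K}{K-2} = \binom{K}{2}$, the matrix ${\bf U}$ is $\frac{1}{2}(K+2)(K-1) \times \binom{K}{2}$, and I index its columns by the $(K-2)$-element subsets $S \subset \{1,\dots,K\}$, or equivalently by the two-element complements $\{a,b\} = \{1,\dots,K\} \setminus S$. The entry of column $S$ in row $r$ is $\prod_{j \in S} \tilde{P}_{rj} \in \{0,1\}$, and equals $1$ exactly when row $r$ of $\tilde{\bf P}$ carries a $1$ in every coordinate indexed by $S$. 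A short count shows the number of rows exceeds the number of columns by $\frac{1}{2}(K+2)(K-1) - \binom{K}{2} = K-1$, so there is room to fit a full basis.

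The central observation is a dictionary between the zero-pattern of a row of $\tilde{\bf P}$ and the columns of ${\bf U}$ that it activates. If a row has exactly two zeros, located at positions $\{c,d\}$, then $\prod_{j\in S}\tilde{P}_{rj} = 1$ iff $S \cap \{c,d\} = \emptyset$, i.e. iff $\{c,d\}$ equals the complement of $S$; hence such a row produces a single $1$ in ${\bf U}$, in the column whose complement is $\{c,d\}$. I therefore build $\tilde{\bf P}$ by choosing, for each of the $\binom{K}{2}$ pairs $\{c,d\}$, one row with zeros exactly at $c$ and $d$ and ones elsewhere, and by filling the remaining $K-1$ rows arbitrarily (say with all-ones rows). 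Ordering these $\binom{K}{2}$ distinguished rows to match the column indexing, the corresponding $\binom{K}{2} \times \binom{K}{2}$ submatrix of ${\bf U}$ is exactly the identity. An identity submatrix forces the columns of ${\bf U}$ to be linearly independent, so ${\bf U}$ has full column rank and the lemma follows.

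The only real content is spotting the two-zeros-to-basis-vector correspondence; once it is in hand the argument is a direct construction rather than a pure existence proof, and the arbitrary $K-1$ filler rows confirm that the ambient dimension $\frac{1}{2}(K+2)(K-1)$ is generous enough. The main point to be careful about is that linear independence is meant over $\mathbb{R}$ (or $\mathbb{Q}$) and not over $\mathbb{F}_2$, since the beamforming vectors ultimately live in $\mathbb{C}^m$; the identity submatrix settles this uniformly. I would deliberately avoid a probabilistic or genericity (Schwartz--Zippel-type) argument here: because the entries of $\tilde{\bf P}$ are constrained to $\{0,1\}$, one cannot perturb them to generic reals, so an explicit witness such as the one above is both cleaner and unconditional.
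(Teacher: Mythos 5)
Your proposal is correct, and it exploits exactly the same mechanism as the paper's proof: an explicit binary witness in which a row of $\tilde{\bf P}$ with exactly two zeros at positions $\{c,d\}$ produces a row of ${\bf U}$ with a single $1$, located in the column whose complementary pair is $\{c,d\}$. The difference is in how the rows are allocated. The paper takes the first $K$ rows to be ${\bf 1}_{K\times K}-{\bf I}_{K\times K}$ (one zero each) and the remaining $\binom{K}{2}-1$ rows to be distinct two-zero rows; since only $\binom{K}{2}-1$ such rows fit, one pair is necessarily missing, and the column of ${\bf U}$ indexed by that missing pair has \emph{no} nonzero entry in the bottom block. The paper's stated justification --- that every column of ${\bf U}$ has a single nonzero element in a unique position of the bottom block --- is therefore literally false for that one column (e.g., the pair $\{2,4\}$ in the paper's $K=4$ example); the construction still works, but only because the top ${\bf 1}-{\bf I}$ block gives that column the value ${\bf e}_a+{\bf e}_b$, which cannot be cancelled once the bottom rows force all other coefficients to zero --- a step the paper leaves unsaid. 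Your allocation --- all $\binom{K}{2}$ two-zero rows plus $K-1$ all-ones filler rows --- avoids this subtlety entirely: you obtain an exact $\binom{K}{2}\times\binom{K}{2}$ identity submatrix, so full column rank is immediate, and your version is both cleaner and patches the small gap in the paper's write-up. Your closing remarks (independence over $\mathbb{R}$ rather than $\mathbb{F}_2$, and the inapplicability of genericity arguments to $\{0,1\}$-constrained entries) are also sound.
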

\begin{proof}
We can easily prove this by providing an example for a matrix that always satisfies this condition. Let ${\bf \tilde{P}} = [{\bf 1}_{K \times K}-{\bf I}_{K \times K}, {\bf A}_{(\frac{1}{2}K-1) \times (K+1) \times K}]$, where ${\bf A}_{(\frac{1}{2}K-1) \times (K+1) \times K}$ has distinct rows, with each row containing exactly $K-2$ ones. For instance, for $K = 4$, the matrix ${\bf \tilde{P}}$ is
\[{\bf \tilde{P}}^{T} = \begin{bmatrix}
     0 & 1 & 1 & 1 & 0 & 0 & 1 & 1 & 0 \\[0.3em]
		 1 & 0 & 1 & 1 & 0 & 1 & 0 & 1 & 1 \\[0.3em]
		 1 & 1 & 0 & 1 & 1 & 0 & 0 & 0 & 1 \\[0.3em]
		 1 & 1 & 1 & 0 & 1 & 1 & 1 & 0 & 0
     \end{bmatrix}.\]
Now focus on the matrix ${\bf U} = [{\bf u}_{1} {\bf u}_{2} ... {\bf u}_{\binom{K}{K-2}}]$, where each column vector ${\bf u}_{i}$ is the Hadamard product of a distinct subset of $K-2$ column vectors in ${\bf \tilde{P}}$.
%It can be easily noted that all columns and rows of ${\bf U}$ are distinct and non-zero. Since every column in ${\bf 1}_{K \times K}-{\bf I}_{K \times K}$ has a single zero in a unique position, then the first $K$ elements in every column vector ${\bf u}_{i}$ will have $K-2$ zeros in unique positions compared to any other column vector ${\bf u}_{j}$. Thus, all columns of ${\bf U}$, in addition to the first $K$ rows, are distinct. Now focus on the remaining $(\frac{1}{2}K-1) \times (K+1)$ rows in ${\bf U}$. Since every row in ${\bf A}$ has exactly $K-2$ ones, and all rows are distinct, then each of the last $(\frac{1}{2}K-1) \times (K+1)$ rows in ${\bf U}$ will have a single element that is equal to 1 in a unique position compared to all other rows. Besides, since each of the last $(\frac{1}{2}K-1) \times (K+1)$ rows in ${\bf U}$ has a single non-zero element, and each of the first $K$ rows has two non-zero elements, then all the rows of ${\bf U}$ are distinct.
It can be easily shown that all columns in ${\bf U}$ are independent, which directly follows from the fact that the last $(\frac{1}{2}K-1) \times (K+1)$ elements in each column in ${\bf U}$ have a single non-zero elements in unique positions. Thus, no set of scalars $\{\alpha_{1}, \alpha_{2},...,\alpha_{\binom{K}{K-2}}\}$ that are not all zero exists such that $\alpha_{1} {\bf u}_{1} + \alpha_{2} {\bf u}_{2} + ... + \alpha_{\binom{K}{K-2}} {\bf u}_{\binom{K}{K-2}} = {\bf 0}_{\frac{1}{2}(K+2)(K-1)}$. Since matrix ${\bf \tilde{P}}$ can be generated for any $K$, the lemma follows. \IEEEQEDhere
\end{proof}

In the next theorem, we prove that the proposed scheme achieves the $\frac{2K}{K+2}$ sum DoF.
\begin{thm}
{\it Applying BIA using the switching patterns and beamforming vectors generated by the algorithm in Section IV-A, the sum DoF of $\frac{2K}{K+2}$ is achieved almost surely for any $K$.}
\end{thm}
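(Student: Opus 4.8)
The plan is to translate the abstract alignment conditions from the proof of Theorem 1 (specialized to the optimal value $l=2$) into concrete, verifiable statements about the binary beamformers and the two-mode switching patterns produced by the algorithm, and then to argue decodability at every receiver. First I would fix notation: write $\mathbf{w}_{ij}$ for the common beamformer the algorithm assigns to the pair of users $(i,j)$, namely $\mathbf{w}_{ij} = \tilde{\mathbf{p}}_1 \circ \cdots \circ \tilde{\mathbf{p}}_{K}$ with the factors $\tilde{\mathbf{p}}_i$ and $\tilde{\mathbf{p}}_j$ omitted, so that $\mathbf{u}^{[i]}_{k_i} = \mathbf{u}^{[j]}_{k_j} = \mathbf{w}_{ij}$. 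Since each $\tilde{\mathbf{p}}_k$ is binary and $\mathbf{P} = \tilde{\mathbf{P}} + \mathbf{1}$, the support of $\mathbf{w}_{ij}$ is exactly the set of channel uses $t$ at which every receiver $k \notin \{i,j\}$ is in mode $2$, i.e. $\mathbf{p}_k(t)=2$. I would then record the arithmetic: the algorithm uses $m = \tfrac{1}{2}(K+2)(K-1)$ channel uses and assigns each user $K-1$ symbols, for $K(K-1)$ symbols in total, so the claimed sum DoF is exactly $K(K-1)/m = \tfrac{2K}{K+2}$; it remains to show that all $K(K-1)$ symbols are decodable almost surely.

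The central structural step is the alignment property. Fix a receiver $r$ and an interference direction $\mathbf{w}_{ij}$ with $i,j \neq r$. Because $r \notin \{i,j\}$, the support of $\mathbf{w}_{ij}$ lies entirely in the set of times where $\mathbf{p}_r(t)=2$, so $\mathbf{H}_{ri}$ acts on that support as the scalar $h_{ri}(2)$ and $\mathbf{H}_{ri}\mathbf{w}_{ij} = h_{ri}(2)\,\mathbf{w}_{ij}$; likewise $\mathbf{H}_{rj}\mathbf{w}_{ij} = h_{rj}(2)\,\mathbf{w}_{ij}$. Thus the interference contributed by users $i$ and $j$ along this shared symbol collapses onto the single direction $\mathbf{w}_{ij}$, realizing the $l=2$ alignment of Theorem 1 at all $K-2$ unintended receivers. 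By contrast, at a receiver $r$ that \emph{is} one of the paired users, say the pair is $(r,j)$, the mode $\mathbf{p}_r(t)$ is unconstrained on the support of $\mathbf{w}_{rj}$, so neither the desired component $\mathbf{H}_{rr}\mathbf{w}_{rj}$ nor the residual interference $\mathbf{H}_{rj}\mathbf{w}_{rj}$ is a multiple of $\mathbf{w}_{rj}$; this switching is precisely what keeps the desired signal separable.

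Next I would carry out the dimension count at receiver $r$. Collecting all received directions, the desired signal occupies the $K-1$ vectors $\{\mathbf{H}_{rr}\mathbf{w}_{rj} : j \neq r\}$, while the interference splits into the $\binom{K-1}{2}$ aligned directions $\{\mathbf{w}_{ij} : i,j \neq r\}$ and the $K-1$ residual directions $\{\mathbf{H}_{rj}\mathbf{w}_{rj} : j \neq r\}$ arising from the symbols each interferer shares with user $r$. The total is $2(K-1) + \binom{K-1}{2} = \tfrac{1}{2}(K+2)(K-1) = m$, so there is exactly enough room. Decodability at receiver $r$ is therefore equivalent to these $m$ vectors being linearly independent, in which case a zero-forcing projection recovers user $r$'s $K-1$ symbols; doing this at every receiver recovers all $K(K-1)$ symbols.

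The main obstacle is establishing this linear independence almost surely. I would organize the vectors by the mode of receiver $r$: the aligned directions $\mathbf{w}_{ij}$ (with $i,j\neq r$) are supported on the mode-$2$ times, whereas the desired and residual vectors are ``colored'' by the diagonal matrices $\mathbf{H}_{rr}$ and $\mathbf{H}_{rj}$, whose entries take the two distinct values $h_{r\cdot}(1),h_{r\cdot}(2)$ across the mode-$1$ and mode-$2$ times within the support of each $\mathbf{w}_{rj}$. Lemma 1 supplies the deterministic backbone: the full family $\{\mathbf{w}_{ij}\}$ of Hadamard products is linearly independent, so the binary directions cannot collapse among themselves. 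It then remains to rule out an accidental dependence created by the coloring. The clean way to finish is to note that the $m\times m$ matrix $\mathbf{M}$ formed by these $m$ vectors has a determinant that is a polynomial in the i.i.d. coefficients $\{h_{ki}(1),h_{ki}(2)\}$; exhibiting a single coefficient assignment for which $\det\mathbf{M}\neq 0$ (most transparently, one that forces each pair $\mathbf{H}_{rr}\mathbf{w}_{rj}$ and $\mathbf{H}_{rj}\mathbf{w}_{rj}$ to be non-parallel while leaving the Lemma 1 independence intact) shows the polynomial is not identically zero, so its zero set has Lebesgue measure zero and independence holds almost surely. I expect the bookkeeping needed to certify $\det\mathbf{M}\not\equiv 0$—in particular the interaction between the mode-$1$/mode-$2$ splitting of the supports and the full-rank conclusion of Lemma 1—to be the technically delicate part; once it is in place, the DoF claim follows immediately.
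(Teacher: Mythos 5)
Your plan is essentially the paper's own proof: the same symbol/channel-use accounting ($K-1$ symbols per user over $m=\tfrac{1}{2}(K+2)(K-1)$ uses, so decodability of all symbols gives $\tfrac{2K}{K+2}$), the same alignment mechanism (the shared beamformer $\mathbf{w}_{ij}$ is supported only on slots where every receiver outside $\{i,j\}$ sits in mode 2, hence $\mathbf{H}_{ri}\mathbf{w}_{ij}$ and $\mathbf{H}_{rj}\mathbf{w}_{ij}$ are scalar multiples of $\mathbf{w}_{ij}$ at each unintended receiver $r$), the same interference count $\binom{K-1}{2}+(K-1)=\tfrac{1}{2}K(K-1)$, and the same use of Lemma 1 as the deterministic backbone. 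The paper organizes this as three conditions (signal space of dimension $K-1$, interference space of dimension $\tfrac{1}{2}K(K-1)$, no intersection), but the content is identical to your decomposition into aligned directions, desired directions, and residual directions.

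The one genuine gap is that you stop exactly where the theorem's crux lies: you reduce everything to $\det\mathbf{M}\not\equiv 0$ and defer the witness as ``delicate bookkeeping.'' The ingredient that closes this---which you mention only parenthetically---is the paper's condition (3) argument: because \emph{both} $\tilde{\mathbf{p}}_r$ and $\tilde{\mathbf{p}}_j$ are omitted from the Hadamard product defining $\mathbf{w}_{rj}$, the support of $\mathbf{w}_{rj}$ necessarily contains slots where receiver $r$ is in mode 1 as well as slots where it is in mode 2 (for the Lemma 1 construction, slots $r$ and $j$ of the $\mathbf{1}-\mathbf{I}$ block), so the desired copy $\mathbf{H}_{rr}\mathbf{w}_{rj}$ and the residual copy $\mathbf{H}_{rj}\mathbf{w}_{rj}$ receive genuinely different diagonal colorings and cannot be parallel almost surely; this is what supplies the non-degeneracy that your polynomial argument needs. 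To be fair, the paper itself only establishes this \emph{pairwise} non-alignment and then asserts joint independence of all $m$ received directions (in the $K=4$ example it simply declares the $9\times 9$ matrix $\mathbf{R}$ full rank almost surely), so your formulation via a single $m\times m$ determinant is actually the more rigorous standard---but neither you nor the paper carries that joint-independence verification to completion. As submitted, your proposal is an accurate reconstruction of the paper's strategy with its final step left open rather than an alternative route.
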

\begin{proof}
We focus on an arbitrary receiver $j$. From (\ref{1}), we know that the received signal at such receiver is given by ${\bf y}_{j} = \sum_{i=1}^{K} {\bf H}_{ji} {\bf x}_{i}$. Based on the algorithm in Section IV-A, we know that each transmitter sends $K-1$ symbols over $\frac{1}{2}(K+2)(K-1)$ channel uses. Given the antenna switching patterns matrix ${\bf P}^{\frac{1}{2}(K+2)(K-1) \times K}$ and the beamforming vectors for all transmitters, we need to prove the following:
\begin{enumerate}
\item At receiver $j$, the signal space does not collapse, i.e., all desired signal vectors are independent.
\item The interference at receiver $j$ always resides in a dimension of $\frac{1}{2}K(K-1)$.
\item The interference space and the desired signal space do not intersect, i.e., all desired signal vectors are independent to all interfering signal vectors.
\end{enumerate}
Condition (1) implies that the $K-1$ desired signal vectors must reside in $K-1$ independent dimensions. Since the total number of dimensions is equal to the number of channel uses $m = \frac{1}{2}(K+2)(K-1)$, then condition (2) follows, i.e., the interference should reside in the remaining $\frac{1}{2}(K+2)(K-1) - (K-1) = \frac{1}{2}K(K-1)$ dimensions. Condition (3) implies that no interference vector aligns with any signal vector. If all these conditions are satisfied, then the DoF of a user $j$ is the ratio between the number of its interference-free signal dimensions to the total channel uses, i.e., $DoF_{j} = \frac{K-1}{\frac{1}{2}(K+2)(K-1)} = \frac{2}{K+2}$. Thus, the sum DoF for all users is $\frac{2K}{K+2}$, which is the achievable DoF stated in Theorem 1. We start by proving condition (1). From lemma 1, we know that all the distinct beamforming vectors ${\bf u}^{[j]}_{k}$, $\forall \, 1 \leq k \leq K-1$ are linearly independent, which means that the signal space will not collapse as it resides in $K-1$ dimensions. Thus, condition (1) is satisfied. Now, we prove that condition (2) is also satisfied. Note that each of the  $K-1$ interferers on receiver $j$ sends $K-1$ symbols. From step 7 in the algorithm, we know that any beamforming vector $k_{i}$ for interferer $i$ is the same as some beamforming vector $k_{w}$ of interferer $w$, i.e., ${\bf u}^{[i]}_{k_{i}} = {\bf u}^{[w]}_{k_{w}} \gets {\bf \tilde{p}}_{1} \circ {\bf \tilde{p}}_{2} \circ ... \circ {\bf \tilde{p}}_{i-1} \circ {\bf \tilde{p}}_{i+1} \circ ... \circ {\bf \tilde{p}}_{w-1} \circ {\bf \tilde{p}}_{w+1} \circ ... \circ {\bf \tilde{p}}_{K}$. Thus, the beamforming weights in both ${\bf u}^{[i]}_{k_{i}}$ and ${\bf u}^{[w]}_{k_{w}}$ are non-zero only if ${\bf \tilde{p}}_{j} = 1$, i.e., when receiver $j$ activates mode 2 of the reconfigurable antenna. Thus, the interfering dimensions $k_{i}$ and $k_{w}$ will always be aligned along the vector ${\bf u}^{[i]}_{k_{i}} = {\bf u}^{[w]}_{k_{w}}$ at receiver $j$ since all their non-zero weights perceive the same antenna mode, and will be independet from all other vectors as shown in lemma 1. Note that for interferer $i$, there also exists some beamforming vector $k^{'}_{i}$, which is the same as some beamforming vector $k^{'}_{j}$ of the desired user $j$ (we will show that they will not align while proving that condition (3) is satisfied). Thus, each interferer sends $K-1$ signal vectors, and has $K-2$ signal vectors each aligned with a distinct signal vector from another interferer, leading to a total of $(K-1)(K-2)/2$ interference dimensions. Besides, each interferer has a signal vector that does not align with any other interferer. Thus, the total number of interference dimensions is $(K-1)(K-2)/2 + (K-1) = \frac{1}{2}K(K-1)$. Thus, condition (2) is satisfied. Finally, we focus on condition (3). Recall that at receiver $j$, each interferer $i \neq j$ has some beamforming vector $k^{'}_{i}$, which is the same as some beamforming vector $k^{'}_{j}$ of the desired user $j$, i.e. ${\bf u}^{[i]}_{k^{'}_{i}} = {\bf u}^{[j]}_{k^{'}_{j}} \gets {\bf \tilde{p}}_{1} \circ {\bf \tilde{p}}_{2} \circ ... \circ {\bf \tilde{p}}_{i-1} \circ {\bf \tilde{p}}_{i+1} \circ ... \circ {\bf \tilde{p}}_{j-1} \circ {\bf \tilde{p}}_{j+1} \circ ... \circ {\bf \tilde{p}}_{K}$. Since both ${\bf \tilde{p}}_{i}$ and ${\bf \tilde{p}}_{j}$ are excluded from the Hadamard product, there exists channel uses at which the beamforming vectors of $i$ and $j$ are activated and the selected antenna mode is 1, and other channel uses  at which the beamforming vectors of $i$ and $j$ are activated and the selected antenna mode is 2. Thus, although ${\bf u}^{[i]}_{k^{'}_{i}} = {\bf u}^{[j]}_{k^{'}_{j}}$, the received desired signal vector and the interference vector do not align since the antenna pattern changes over the non-zero entries of the beamforming vectors, i.e. they are linearly independent almost surely. Hence, all signal vectors are independent on all interference vectors almost surely.
Since conditions (1), (2), and (3) are satisfied, then the DoF of user $j$ is $2/(K+2)$. By symmetry, the sum DoF is given by $2K/(K+2)$. Thus, the sum DoF in Theorem 1 is achieved.
\IEEEQEDhere
\end{proof}

\subsection{Application to the 4-user SISO Interference Channel}
Consider a fully connected 4-user SISO Interference Channel. It follows from the discussion in the previous section that each user can send 3 symbols over 9 channel uses. One possible choice for the matrix ${\bf P}$ is given by
\begin{equation}
\label{15}
{\bf P}^{T} = \begin{bmatrix}
     1 & 2 & 1 & 2 & 1 & 2 & 2 & 2 & 1 \\[0.3em]
		 2 & 1 & 1 & 2 & 2 & 1 & 2 & 2 & 2 \\[0.3em]
		 2 & 2 & 2 & 1 & 1 & 1 & 1 & 2 & 2 \\[0.3em]
		 2 & 2 & 2 & 2 & 2 & 2 & 1 & 1 & 1
     \end{bmatrix}
\end{equation}		
where it can be seen that ${\bf P}$ satisfies the conditions in lemma 1. The algorithm starts by picking any distinct pair of users, e.g., users 3 and 4, and selects any of their symbols, say the first symbol of each. These symbols can be aligned at receivers 1 and 2. We let the beamforming vectors ${\bf u}^{[3]}_{1}$ and ${\bf u}^{[4]}_{1}$ for both symbols be equal and set their values to $ {\bf \tilde{p}}_{1} \circ {\bf \tilde{p}}_{2}$. From (\ref{15}), we have ${\bf \tilde{p}}_{1} = [0 \,\, 1 \,\, 0 \,\, 1 \,\, 0 \,\, 1 \,\, 1 \,\, 1 \,\, 0]^{T}$, and ${\bf \tilde{p}}_{2} = [1 \,\, 0 \,\, 0 \,\, 1 \,\, 1 \,\, 0 \,\, 1 \,\, 1 \,\, 1]^{T}$. Thus, we have ${\bf u}^{[3]}_{1} = {\bf u}^{[4]}_{1} = [0 \,\, 0 \,\, 0 \,\, 1 \,\, 0 \,\, 0 \,\, 1 \,\, 1 \,\, 0]^{T}$. Similarly, we can align every pair of 2 symbols from 2 distinct users at the $K-2$ receivers of the remaining users by choosing the beamforming vectors as follows:
\begin{align}
\label{16}
{\bf u}^{[2]}_{1} = {\bf u}^{[4]}_{2} &= {\bf \tilde{p}}_{1} \circ {\bf \tilde{p}}_{3} =  [0 \,\, 1 \,\,0 \,\, 0 \,\,	0 \,\, 0 \,\,	0 \,\,	1 \,\,0]^{T}, \\
{\bf u}^{[2]}_{2} = {\bf u}^{[3]}_{2} &= {\bf \tilde{p}}_{1} \circ {\bf \tilde{p}}_{4} =  [0 \,\,	1 \,\, 0 \,\,	1 \,\, 0 \,\,	1 \,\,	0 \,\, 0 \,\,	0]^{T}, \\
{\bf u}^{[1]}_{1} = {\bf u}^{[4]}_{3} &= {\bf \tilde{p}}_{2} \circ {\bf \tilde{p}}_{3} =  [1 \,\,	0 \,\, 0 \,\,	0 \,\, 0 \,\,	0 \,\, 0 \,\,	1 \,\, 1]^{T} ,\\
{\bf u}^{[1]}_{2} = {\bf u}^{[3]}_{3} &= {\bf \tilde{p}}_{2} \circ {\bf \tilde{p}}_{4} =  [1 \,\,	0  \,\,	0  \,\,	1 \,\, 1 \,\,	0 \,\, 0 \,\,	0 \,\, 0]^{T},\\
{\bf u}^{[1]}_{3} = {\bf u}^{[2]}_{3} &= {\bf \tilde{p}}_{3} \circ {\bf \tilde{p}}_{4} =  [1 \,\, 1 \,\, 1 \,\, 0 \,\, 0 \,\, 0 \,\, 0 \,\, 0 \,\, 0]^{T},\\
{\bf u}^{[3]}_{1} = {\bf u}^{[4]}_{1} &= {\bf \tilde{p}}_{1} \circ {\bf \tilde{p}}_{2} =  [0 \,\, 0 \,\, 0 \,\, 1 \,\, 0 \,\, 0 \,\, 1 \,\, 1 \,\, 0]^{T}.
\end{align}

\begin{figure*}[!t]
\normalsize
\setcounter{mytempeqncnt}{\value{equation}}
\setcounter{equation}{20}
\begin{align}
\label{20}
{\bf H}_{1k} = \mbox{diag}([h_{1k}(1) \,\, h_{1k}(2) \,\, h_{1k}(1) \,\, h_{1k}(2) \,\, h_{1k}(1) \,\, h_{1k}(2) \,\, h_{1k}(2) \,\, h_{1k}(2) \,\, h_{1k}(1)]).
\end{align}
\setcounter{equation}{\value{mytempeqncnt}+1}
\hrulefill
\vspace*{4pt}
\end{figure*}

\begin{figure*}[!t]
\normalsize
\setcounter{mytempeqncnt}{\value{equation}}
\setcounter{equation}{21}
\[{\bf y}_{1} =\]
\begin{equation}
\label{21}
\underbrace{\begin{bmatrix}
h_{11}(1) & h_{11}(1) & h_{11}(1) \\
0 & 0  & h_{11}(2) \\
0	& 0	 & h_{11}(1) \\
0	& h_{11}(2) &	0 \\
0	& h_{11}(1) &	0 \\
0	& 0 &	0 \\
0 &	0	& 0 \\
h_{11}(2) &	0 &	0 \\
h_{11}(1) &	0 &	0
\end{bmatrix}}_{\mbox{{\bf Rank = 3}}}
\begin{bmatrix}
s_{1}^{[1]}\\
s_{2}^{[1]} \\
s_{3}^{[1]}
\end{bmatrix} +
\underbrace{\begin{bmatrix}
 0          & 0 				& h_{12}(1) & 0 				& 0 				& h_{13}(1) & 0 				& 0 				& h_{14}(1) \\
 h_{12}(2)  & h_{12}(2) & h_{12}(2) & 0 				& h_{13}(2) & 0 				& 0 				& h_{14}(2) & 0\\
 0 		   		& 0 				& h_{12}(1) & 0 				& 0 				& 0 				& 0 				& 0 				& 0\\
 0 					& h_{12}(2) & 0 				& h_{13}(2) &	h_{13}(2) &	h_{13}(2) &	h_{14}(2) &	0 				&	0\\
 0 					&	0  				&	0 				&	0 				&	0 				&	h_{13}(1) &	0 				&	0 				&	0\\
 0					&	h_{12}(2) &	0 				&	0 				&	h_{13}(2) &	0 				&	0 				&	0 				&	0\\
 0 					&	0  				&	0 				&	h_{13}(2) &	0 				&	0 				&	h_{14}(2) &	0				  &	0\\
 h_{12}(2) 	&	0 				& 0 				&	h_{13}(2) &	0					& 0 				&	h_{14}(2) &	h_{14}(2) &	h_{14}(2)\\
 0 					&	0  				&	0 				&	0 				&	0 				&	0 				&	0 				&	0				  &	h_{14}(1)
\end{bmatrix}}_{\mbox{{\bf Rank = 6}}}
\begin{bmatrix}
s_{1}^{[2]}\\
s_{2}^{[2]}\\
s_{3}^{[2]}\\
s_{1}^{[3]}\\
s_{2}^{[3]}\\
s_{3}^{[3]}\\
s_{1}^{[4]}\\
s_{2}^{[4]}\\
s_{3}^{[4]}
\end{bmatrix}
\end{equation}
\setcounter{equation}{\value{mytempeqncnt}+1}
\hrulefill
\vspace*{4pt}
\end{figure*}

\begin{figure*}[!t]
\normalsize
\setcounter{mytempeqncnt}{\value{equation}}
\setcounter{equation}{22}
\begin{equation}
\label{22}
{\bf R} \equiv
\begin{bmatrix}
 h_{11}(1) & h_{11}(1) & h_{11}(1) & 0          & 0 				& h_{12}(1) & 0 				 & h_{13}(1)   & h_{14}(1) \\
 0 				 & 0  			 & h_{11}(2) & h_{12}(2)  & h_{12}(2) & h_{12}(2) & 0 				 & 0 				   & 0\\
 0				 & 0	 			 & h_{11}(1) & 0 		   		& 0 				& h_{12}(1) & 0 				 & 0 				   & 0\\
 0				 & h_{11}(2) & 0 			   & 0 					& h_{12}(2) & 0 				& h_{13}(2)  & h_{13}(2)   & 0\\
 0				 & h_{11}(1) & 0 			   & 0 					&	0 				&	0 				&	0 				 & h_{13}(1)   & 0\\
 0				 & 0 				 & 0 			   & 0 					&	h_{12}(2) &	0 				&	0 				 & 0 				   & 0\\
 0 				 & 0				 & 0 		  	 & 0 					&	0 				&	0 				&	h_{13}(2)  & 0 				   & 0\\
 h_{11}(2) & 0 				 & 0 				 & h_{12}(2) 	&	0					& 0 				&	h_{13}(2)  & 0 				   & h_{14}(2)\\
 h_{11}(1) & 0 				 & 0  			 & 0 					&	0 				&	0 				&	0 				 & 0 				   & h_{14}(1)
\end{bmatrix}
\end{equation}
\setcounter{equation}{\value{mytempeqncnt}+1}
\hrulefill
\vspace*{4pt}
\end{figure*}

Now, we focus on receiver 1. The antenna switching pattern of receiver 1 is ${\bf p}_{1} = [1 \,\, 2 \,\, 1 \,\, 2 \,\, 1 \,\, 2 \,\, 2 \,\, 2 \,\, 1]^{T}$. Thus, the equivalent channel between transmitter $k$ and receiver 1 is given by (\ref{20}), and the received signal at receiver 1 is given by (\ref{21}), where we drop the additive noise term for convenience. From (\ref{21}), we see that the desired symbols $s_{1}^{[1]}$, $s_{2}^{[1]}$, and $s_{3}^{[1]}$ occupy a 3-dimensional space. For the interfering symbols, we observe that $s_{1}^{[2]}$ and $s_{2}^{[4]}$ align along the vector $[0 \,\, 1 \,\, 0 \,\, 0 \,\, 0 \,\, 0 \,\, 0 \,\, 1 \,\, 0]^{T}$, $s_{2}^{[2]}$ and $s_{2}^{[3]}$ align along the vector $[0 \,\, 1 \,\, 0 \,\, 1 \,\, 0 \,\, 1 \,\, 0 \,\, 0 \,\, 0]^{T}$, while $s_{1}^{[3]}$ and $s_{1}^{[4]}$ align along the vector $[0 \,\, 0 \,\, 0 \,\, 1 \,\, 0 \,\, 0 \,\, 1 \,\, 1 \,\, 0]^{T}$. This is due to the fact that every pair of these symbols have a non-zero entry in the their beamforming vectors only when the receiver selects mode 2, which is guaranteed by step 7 in the BIA algorithm. Thus, the interfering signal occupies a 6-dimensional space. To make sure that all desired symbols are decodable, we need to prove that all vectors carrying the desired symbols are linearly independent and the 3-dimensional subspace carrying the desired symbols does not intersect with the 6-dimensional interference subspace. This can be easily proved by showing that the 9 $\times$ 9 matrix ${\bf R}$ defined in (\ref{22}), which contains all the received desired and interference vectors, is full rank, which follows from the fact that all rows and columns in ${\bf R}$ are linearly independent almost surely. Thus, receiver 1 achieves $\frac{1}{3}$ DoF almost surely. The same analysis can be applied to receivers 2, 3, and 4. At every receiver, the interference will occupy a 6-dimensional space while the desired signal occupies a 3-dimensional space. Hence, the achieved sum DoF is $\frac{4}{3}$.
Note that for $K$ = 3, each two symbols can be aligned at one unintended receiver, and the proposed algorithm reduces to the scheme in \cite{7} achieving a sum DoF of $\frac{6}{5}$. For $K$ = 5, each two symbols can be aligned at 3 unintended receivers, and the achieved sum DoF is $\frac{10}{7}$.

\section{Conclusion}
In this paper, we have shown that linear Blind Interference Alignment (BIA) using staggered antenna mode switching can achieve a sum DoF of $\frac{2K}{K+2}$ in the $K$-user SISO interference channel. A key insight is that each signal dimension from one user can be aligned with a single set of distinct users at the receivers of the remaining users. This result suggests that we can double the unity DoF of orthogonal multiple access schemes without channel state information at the transmitters. Moreover, we proposed an algorithm to generate the transmit beamforming vectors and antenna switching patterns utilized in BIA. We showed that the proposed algorithm can achieve the $\frac{2K}{K+2}$ sum DoF for any $K$. By applying this algorithm to the 4-user Interference Channel, it was shown that a sum DoF of $\frac{4}{3}$ is achievable. 

%\IEEEtriggeratref{6}

% that's all folks
\end{document}